\documentclass[10pt,twocolumn]{article}

\usepackage{amsmath,amssymb}
\usepackage{amsthm} 
\usepackage{graphicx}
\usepackage{times}
\usepackage{hyperref}
\usepackage{enumitem}
\usepackage{geometry}
\usepackage{tikz}

\geometry{margin=0.75in, top=0.8in, bottom=0.9in, columnsep=0.25in}

\theoremstyle{plain}
\newtheorem{theorem}{Theorem}[section]   
\newtheorem{lemma}[theorem]{Lemma}      

\theoremstyle{definition}

\theoremstyle{remark}

\title{\vspace{-1em}Exact \(n + 2\) Comparison Complexity for the N-Repeated Element Problem}
\author{Andrew Au\\[0.2em]
\small Independent Researcher\\[0.2em]
\small Corresponding author: \texttt{cshung@gmail.com}}
\date{\vspace{-1.5em}}

\begin{document}

\twocolumn[
\maketitle
\vspace{-1em}
]

\begin{abstract}
This paper establishes the exact comparison complexity of finding an element repeated $n$ times in a $2n$-element array containing $n+1$ distinct values, under the equality-comparison model with $O(1)$ extra space. 

We present a simple deterministic algorithm performing exactly $n+2$ comparisons and prove this bound \emph{tight}: any correct algorithm requires at least $n+2$ comparisons in the worst case. 

The lower bound follows from an adversary argument using graph-theoretic structure. Equality queries build an \emph{inequality graph} $I$; its complement $P$ (potential-equalities) must contain either two disjoint $n$-cliques or one $(n+1)$-clique to maintain ambiguity. We show these structures persist up through $n+1$ comparisons via a ``pillar matching'' construction and edge-flip reconfiguration, but fail at $n+2$.

This result provides a concrete, self-contained demonstration of exact lower-bound techniques, bridging toy problems with nontrivial combinatorial reasoning.
\end{abstract}

\noindent\textbf{Keywords:}
comparison complexity; N-repeated element problem; equality comparison model; adversary argument; inequality graph; lower bound.

\section{Introduction}

This paper introduces lower-bound techniques through a concrete, easily stated example. 
The motivating problem is inspired by LeetCode~961 \cite{LeetCode961}.

Consider an array of length \(2n\), containing exactly \(n+1\) distinct values. 
Among these, one element appears \(n\) times, while every other element appears once. 

The task is algorithmically simple---it serves only as a vehicle to explore \emph{how} optimality can be proved.

Several natural approaches illustrate the trade-offs between time and space:

\begin{itemize}[leftmargin=1.2em]
    \item A hash-table approach detects collisions in expected linear time but requires linear extra space.
    \item Sorting the array and examining its middle segment achieves \(O(1)\) extra space but incurs \(O(n \log n)\) time.
    \item Randomized pairwise comparisons often succeed quickly but may, in the worst case, fail to terminate.
\end{itemize}

In this work, I focus on a simple deterministic algorithm and prove its \emph{exact} optimality.

The algorithm proceeds as follows. 
Partition the array into \(n\) disjoint pairs and compare each pair for equality. 
If any equality is found, return that value immediately. 
Otherwise, each pair must contain exactly one copy of the majority element. 
The algorithm then performs two additional ``local'' comparisons: it tests whether one element of the first pair equals either element of the second pair; 
if neither test succeeds, the algorithm returns the remaining candidate.

This method performs exactly \(n + 2\) comparisons and uses only \(O(1)\) additional space.

The natural question is whether this bound is optimal. 

Under the \emph{equality-comparison model} with constant auxiliary space, 
I prove that any correct algorithm must perform at least \(n + 2\) comparisons. 
Consequently, the above procedure achieves the exact comparison complexity---tight with no additional gap. 

This result provides a concise yet rigorous demonstration of how lower bounds emerge from combinatorial structure, 
even in seemingly simple computational settings.

\section{Computational Model}

To justify a statement such as ``Any correct algorithm must perform at least \(n + 2\) comparisons,'' we must first define precisely what counts as an algorithm. 
Otherwise, one could imagine a nonsensical ``oracle algorithm'':

\begin{quote}
I hand the array to a deity, and the deity immediately returns the correct answer.
\end{quote}

The purpose of a computational model is to exclude such oracles while retaining all legitimate algorithms. 
In general, a good model should be:

\begin{itemize}[leftmargin=1.2em]
    \item broad enough to capture all reasonable algorithms; and  
    \item restrictive enough to make universal lower-bound proofs possible.
\end{itemize}

Although a Turing machine can faithfully represent any algorithm, it is too general for the kind of exact complexity reasoning pursued here. 
Instead, this paper adopts a more specific model: algorithms operate solely through \emph{equality comparisons}.

The allowed operations are as follows:

\begin{itemize}[leftmargin=1.2em]
    \item The algorithm may select any pair of array elements to compare for equality.  
    \item It learns whether the two values are equal or different.  
    \item It may record this information, either explicitly in a small data structure or implicitly as part of its control logic (as in the optimal algorithm from the introduction).  
    \item It may adapt future comparisons based on previous outcomes and eventually output the repeated element.
\end{itemize}

In the original LeetCode formulation, the array elements are numbers, which in principle allows many additional operations: comparisons by magnitude, bitwise manipulations, or arithmetic tests. 
None of these are permitted in this model. 

This restriction is intentional---it simplifies the logical structure of the lower-bound proof. 
Allowing richer operations would make the model more expressive but also render such proofs substantially more difficult, as will be discussed later.

\section{Adversary Argument}

The previous section defined the space of allowable algorithms. 
We now turn to the main question: how can one assert that \emph{any} correct algorithm must perform at least \(n + 2\) comparisons?

Equivalently, the task is to show that every algorithm making at most \(n + 1\) comparisons fails on at least one valid input instance. 

This form of argument is typically established through an \emph{adversary method} \cite{InkuluAdversary}. 
We imagine a two-player game between the algorithm and an adversary, similar to a chess match. 
At each step, the algorithm chooses two elements to compare, and the adversary must respond immediately with either ``equal'' or ``not equal.'' 
Eventually, the algorithm must commit to an answer---its guess for the repeated element. 
The adversary wins if it can produce a concrete input consistent with all previous answers yet for which the algorithm’s output is incorrect.

If such an adversary can always force a win whenever the algorithm performs at most \(n + 1\) comparisons, then the lower bound is proved.

Thus, our goal is to construct a \emph{universal adversary} that guarantees a winning response under this constraint.

The key idea is that the adversary maintains ambiguity by keeping multiple consistent input configurations possible. 
If, after answering all comparison queries, the adversary can still exhibit at least two distinct valid instances---each consistent with the observed answers but having different majority elements---then the algorithm necessarily fails on one of them. 
This property forms the foundation of the lower-bound proof.

\section{Strategy: Graph Interpretation}

How can the adversary maintain multiple consistent answers? 
Initially, before any comparisons have been made, all \(2n\) array positions are unconstrained: 
any subset of \(n\) elements could correspond to the majority element. 

As the algorithm performs equality queries, the adversary naturally responds ``not equal.'' 
Each such answer imposes a new restriction---the two queried elements cannot both belong to the same majority group. 

To reason about these restrictions systematically, we define two complementary graphs.

Let \(I\) denote the \emph{inequality graph}. 
It has a vertex for each array position and an edge between two vertices if and only if the algorithm has tested them for equality. 
Because the adversary always replies ``not equal,'' each comparison adds an edge to \(I\). 

Let \(P\) denote the \emph{potential-equality graph}, defined as the complement of \(I\). 
An edge in \(P\) indicates that the corresponding pair of elements could still be equal---that is, their equality has not yet been ruled out. 

The majority element occurs exactly \(n\) times, and all occurrences of that element are mutually equal. 
Hence, the set of positions containing the majority element forms an \(n\)-clique within \(P\).

For the adversary to preserve ambiguity, \(P\) must contain more than one possible \(n\)-clique.
However, not all such cliques guarantee a win. 
If every \(n\)-clique in \(P\) shares at least one common vertex, the algorithm can safely choose that vertex’s element as its output, since it is consistent with all configurations.

Therefore, the adversary wins precisely when the \(n\)-cliques of \(P\) do \emph{not} all intersect. 
In that case, at least two disjoint possibilities remain for the location of the majority element, ensuring that any algorithm making the current set of comparisons cannot yet be correct.

There are two simple structural conditions that guarantee non-intersection:

\begin{enumerate}[leftmargin=1.2em]
    \item \textbf{Two disjoint \(n\)-cliques.} \\
    If \(P\) contains two vertex-disjoint \(n\)-cliques, the adversary can pick either as the majority group, contradicting any fixed output of the algorithm.
    \item \textbf{One \((n+1)\)-clique.} \\
    An \((n+1)\)-clique contains \(n + 1\) distinct \(n\)-cliques as its subsets, and no single vertex belongs to all of them. 
    Thus, even in this configuration, there is no uniquely guaranteed majority candidate.
\end{enumerate}

These two cases capture exactly the structural scenarios that enable the adversary’s success, and they form the foundation for the tight \(n + 2\) comparison lower bound established later.

\section{Finding a Large Clique}

To apply the previous reasoning, we must ensure that the required \(n\)- or \((n+1)\)-cliques actually exist in the potential-equality graph \(P\). 
How can such cliques be guaranteed?

The key insight comes from viewing \(P\) in terms of its complement \(I\), the inequality graph. 
Let \(C\) denote a minimum vertex cover \cite{VertexCoverIntro} of \(I\). 
If the cover \(C\) is small---that is, if \(|C| < n\)---then the vertices outside the cover, \(V(I) \setminus C\), induce a complete subgraph in \(P\). 
Indeed, no edge of \(I\) connects any two of these vertices, so all possible edges among them exist in \(P\). 
Hence \(P\) contains a clique of size
\[
|V(I) \setminus C| = 2n - |C| > n,
\]
which suffices to guarantee the adversary’s ambiguity.

To bound the size of \(C\), we rely on a simple combinatorial observation.

\begin{lemma}
Let \(G\) be a graph with \(k\) edges. 
Then its minimum vertex cover satisfies \(|C| \le k\).
\end{lemma}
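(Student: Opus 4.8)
The plan is to exhibit an explicit vertex cover of size at most $k$; since the minimum vertex cover $C$ can be no larger than any particular cover, the bound then follows immediately. Concretely, I would enumerate the edges of $G$ as $e_1, e_2, \dots, e_k$, and for each edge $e_i$ choose one of its two endpoints, call it $v_i$, by an arbitrary fixed rule (for instance, the endpoint with the smaller label).

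Next I would set $S = \{v_1, v_2, \dots, v_k\}$ and observe that $|S| \le k$, with the inequality possibly strict when the same vertex is selected for several edges — a coincidence that only helps. The key verification is that $S$ is a vertex cover: every edge $e_i$ is incident to the vertex $v_i$, which lies in $S$ by construction, so no edge of $G$ is left uncovered. Hence $S$ covers all $k$ edges, and since $C$ is a \emph{minimum} vertex cover while $S$ is merely \emph{some} vertex cover, we conclude $|C| \le |S| \le k$, as claimed.

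There is essentially no obstacle here; the argument is a one-line greedy selection, and the only point worth flagging is tightness. For a graph consisting of $k$ pairwise disjoint edges (a perfect matching on $2k$ vertices), any vertex cover must pick at least one endpoint from each of the $k$ independent edges, so $|C| = k$ exactly, showing the lemma cannot be improved in general. In the application that follows we will in fact only invoke the weaker consequence that a graph with fewer than $n$ edges has a vertex cover of size strictly less than $n$, which is what drives the clique-existence argument in $P$.
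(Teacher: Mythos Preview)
Your proof is correct and follows exactly the same approach as the paper: pick one endpoint from each edge to obtain a cover of size at most $k$, then invoke minimality. The added remarks on tightness via a perfect matching and on the downstream application are accurate and in fact anticipate the paper's next section.
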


\begin{proof}
Select one endpoint from each edge of \(G\); this set of at most \(k\) vertices meets every edge, and thus forms a valid vertex cover. 
Since a minimum cover cannot be larger, it follows that \(|C| \le k\).
\end{proof}

Applying this lemma to \(I\), if the algorithm performs fewer than \(n\) equality comparisons, then \(I\) has fewer than \(n\) edges. 
Consequently, its minimum vertex cover has size \(|C| < n\), and as shown above, \(P\) contains a clique of size at least \(n + 1\).
Such a structure ensures the existence of multiple consistent instances, implying that any algorithm performing fewer than \(n\) comparisons cannot be correct.

\section{Finding Disjoint Cliques}

What happens when the algorithm performs exactly \(n\) comparisons? 
The vertex-cover lemma now yields \(|C| \le n\), so we can no longer guarantee an \((n+1)\)-clique in \(P\). 
Instead, we seek the second winning condition: two \emph{disjoint} \(n\)-cliques in \(P\).

We first strengthen the vertex-cover observation with a structural characterization.

\begin{lemma}
Let \(G\) have a minimum vertex cover of size exactly \(k\). 
Then the edges of \(G\) form a \emph{perfect matching} \cite{MatchingIntro}: no two edges share a common vertex.
\end{lemma}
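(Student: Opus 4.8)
The plan is to prove the contrapositive, using the preceding lemma to reframe the hypothesis as an extremal condition. First I would pin down the intended reading: from the surrounding discussion, $G$ is a graph with exactly $k$ edges (the inequality graph $I$ after $k = n$ comparisons), and the hypothesis is that its minimum vertex cover has size exactly $k$. By the previous lemma, $k$ is the largest the cover number can be, so we are precisely in the tight case. One should also read ``perfect matching'' loosely here: the conclusion is simply that the edge set is a matching, i.e. no two edges share an endpoint, with isolated vertices playing no role.

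The core step is a short counting argument. Suppose, toward a contradiction, that two distinct edges $e_1$ and $e_2$ share a common vertex $v$. Then $\{v\}$ alone covers both $e_1$ and $e_2$. For each of the remaining $k-2$ edges, pick one endpoint arbitrarily, forming a set $S$ with $|S| \le k-2$; then $\{v\} \cup S$ covers every edge of $G$ and has size at most $k-1$. This contradicts the assumption that the minimum vertex cover has size exactly $k$. Hence no vertex lies on two edges, so the edges of $G$ are pairwise disjoint, which is the claim.

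I do not expect a genuine obstacle in the lemma itself; it is essentially a one-line bound once the right cover is exhibited. The only care needed is to (i) confirm the implicit ``$k$ edges'' hypothesis, so that the previous lemma supplies the matching upper bound $|C| \le k$ and ``size exactly $k$'' is meaningful, and (ii) dispatch the degenerate cases $k \in \{0,1\}$, where the statement is vacuous. The more substantial work lies downstream: combining this structural characterization of $I$ with the construction of two vertex-disjoint $n$-cliques in $P$, which belongs to the next section rather than to this lemma.
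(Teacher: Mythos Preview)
Your proof is correct and matches the paper's own argument essentially line for line: assume two edges share a vertex $v$, cover them with $\{v\}$, pick one endpoint from each of the remaining $k-2$ edges, obtain a cover of size at most $k-1$, and derive a contradiction. Your additional remarks clarifying the implicit ``$k$ edges'' hypothesis and the intended loose reading of ``perfect matching'' are apt and improve on the paper's terse statement.
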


\begin{proof}
Suppose two edges share a vertex \(v\). 
Then the set consisting of \(v\) plus one endpoint from each remaining \(k-2\) edges forms a vertex cover of size at most \(k-1\), contradicting minimality. 
Thus, the edges must be vertex-disjoint.
\end{proof}

This structure admits a vivid geometric interpretation. 
Visualize the \(n\) edges of \(I\) as \(n\) ``pillars'': each pillar connects one vertex on the ``table'' (bottom level) to one vertex ``above'' (top level). 
Since the edges form a perfect matching, there are exactly \(n\) bottom vertices and \(n\) top vertices, with no sharing.

In the complement graph \(P\), the absence of edges \emph{within} each level implies \emph{complete connectivity} within each level. 
Thus:
\begin{itemize}[leftmargin=1.2em]
    \item The \(n\) bottom vertices form an \(n\)-clique in \(P\).
    \item The \(n\) top vertices form another \(n\)-clique in \(P\).
\end{itemize}
Moreover, these two cliques are vertex-disjoint by construction.

The adversary can now choose either clique as the majority set---perfectly consistent with all comparison answers but forcing any fixed algorithm output to fail on at least one possibility.

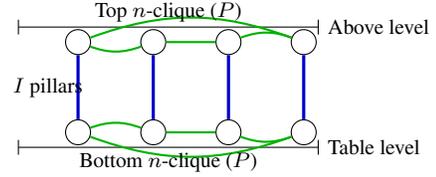
\begin{figure}[t]
\centering
\begin{tikzpicture}[
    vertex/.style={circle, draw, fill=white, minimum size=6pt},
    pillar/.style={very thick, blue!80!black},
    every node/.style={font=\footnotesize}
]

\draw[|-|] (0,-0.2) -- (4,-0.2) node[right] {Table level};
\node[vertex] (b1) at (0.8,0) {};
\node[vertex] (b2) at (1.8,0) {};
\node[vertex] (b3) at (2.8,0) {};
\node[vertex] (b4) at (3.8,0) {};

\draw[|-|] (0,1.4) -- (4,1.4) node[right] {Above level};
\node[vertex] (t1) at (0.8,1.2) {};
\node[vertex] (t2) at (1.8,1.2) {};
\node[vertex] (t3) at (2.8,1.2) {};
\node[vertex] (t4) at (3.8,1.2) {};

\draw[pillar] (b1) -- (t1);
\draw[pillar] (b2) -- (t2);
\draw[pillar] (b3) -- (t3);
\draw[pillar] (b4) -- (t4);

\draw[thick, green!70!black] (b1) to[bend left=20] (b2);
\draw[thick, green!70!black] (b2) -- (b3);
\draw[thick, green!70!black] (b3) to[bend right=20] (b4);
\draw[thick, green!70!black] (b4) to[bend left=20] (b1);

\draw[thick, green!70!black] (t1) to[bend right=20] (t2);
\draw[thick, green!70!black] (t2) -- (t3);
\draw[thick, green!70!black] (t3) to[bend left=20] (t4);
\draw[thick, green!70!black] (t4) to[bend right=20] (t1);

\node at (0.4,0.6) {$I$ pillars};
\node at (2,-0.4) {Bottom $n$-clique ($P$)};
\node at (2,1.6) {Top $n$-clique ($P$)};

\end{tikzpicture}
\caption{Two disjoint $n$-cliques in $P$ from a perfect matching in $I$.
Blue pillars represent ``not equal'' edges; green edges represent cliques in $P$.}
\label{fig:pillars}
\end{figure}

\section{The $(n+1)$-Comparison Case}

What happens when the algorithm performs exactly \(n + 1\) comparisons? 
After the first \(n\) comparisons, we have the pillar structure with two disjoint $n$-cliques in $P$ (see \autoref{fig:pillars}). 
Can the adversary preserve ambiguity?

Consider the $(n+1)$st comparison. There are two cases:

\begin{enumerate}[leftmargin=1.2em]
    \item \textbf{Inter-layer comparison.} \\
    If the algorithm compares across layers (bottom to top), the adversary responds ``not equal.'' 
    All $n+1$ edges in $I$ are now inter-layer, so both layers remain complete $n$-cliques in $P$.
    
    \item \textbf{Intra-layer comparison.} \\
    Suppose the algorithm compares two bottom vertices $b_i$ and $b_j$. 
    The adversary responds ``not equal'' (adding $\{b_i, b_j\}$ to $I$). 
    
    The adversary then \emph{flips} the pillar above $b_i$: swap so $b_i$ connects to $t_j$ and $b_j$ connects to $t_i$. 
    Now the edge $\{b_i, b_j\}$ becomes an \emph{inter-layer} edge in this new configuration.
\end{enumerate}

In both cases, the adversary ensures \emph{all $n+1$ edges of $I$ connect different layers}. 
Thus $P$ retains two vertex-disjoint $n$-cliques: the bottom layer and the top layer. 
Ambiguity persists.

Only on the $(n+2)$nd comparison does the adversary lose---it can no longer maintain this inter-layer property across both layers simultaneously.

\section{Discussion}

The key property enabling this exact \(n+2\) bound is the restriction to equality comparisons only. 
This model captures only \emph{direct} observations---each comparison adds precisely one edge to the inequality graph \(I\), with no additional inferences possible.

What happens if we allow \emph{magnitude comparisons} (\(<, =\, >\))? 
The transitivity of the total order fundamentally changes the proof structure. 
Suppose the algorithm compares \(a_i < a_j\) and \(a_j < a_k\); it can immediately infer \(a_i < a_k\) \emph{without} an explicit comparison. 
These transitive closures effectively add ``free'' edges to \(I\), dramatically accelerating clique destruction in \(P\).

For example, a single chain of \(n\) magnitude comparisons could establish a total order over \(n+1\) elements, 
forcing all but one to be distinct from the majority---potentially solving the problem in far fewer than \(n+2\) direct comparisons.

This explains why equality-only algorithms require exactly \(n+2\) comparisons, 
while numeric algorithms can do better by leveraging order transitivity. 
The equality model thus serves as an ideal ``toy setting'' for teaching lower-bound techniques: 
simple enough to analyze exhaustively, yet rich enough to demonstrate nontrivial combinatorial structure.

\section*{Acknowledgements}

This research did not receive any specific grant from funding agencies in the public, commercial, or not-for-profit sectors.

\end{document}